\documentclass{llncs}
%
%
\usepackage{graphicx}
\usepackage{amsfonts}
\usepackage{amsmath}
\usepackage{hyperref}
%
%
\title{Integrity-protecting block cipher modes --- Untangling a tangled web}
\author{Chris J. Mitchell}
\institute{Royal Holloway, University of London, Egham TW20 0EX, UK \email{c.mitchell@rhul.ac.uk};~~ \url{www.chrismitchell.net}}
%
%

\parskip = 1ex
\parindent = 0pt




\begin{document}

\maketitle

\begin{abstract}
This paper re-examines the security of three related block cipher modes of operation designed to
provide authenticated encryption. These modes, known as PES-PCBC, IOBC and EPBC, were all proposed in the mid-1990s. However, analyses of security of the latter two modes were published more recently.  In each case one or more papers describing security issues with the schemes were eventually published, although a flaw in one of these analyses (of EPBC) was subsequently discovered --- this means that until now EPBC had no known major issues. This paper establishes that, despite this, all three schemes possess defects which should prevent their use --- especially as there are a number of efficient alternative schemes possessing proofs of security.

\keywords{authenticated encryption \and cryptanalysis \and mode of operation}
\end{abstract}

\section{Introduction} \label{s-Intro}

This paper is a somewhat tangled story of three different, albeit very closely related, proposals for a block cipher mode of operation providing authenticated encryption. Sadly, all of the schemes been
shown to be insecure --- often in quite different ways. This is, to the author's knowledge, the
first paper to bring together the three strands of research; at the same time errors in
previous cryptanalysis are acknowledged and further attacks described.

All three of the schemes we examine are examples of a `special' mode of operation for block ciphers,
designed to offer `low cost' combined integrity and confidentiality protection by combining
encryption with the addition of simple (or fixed) redundancy to the plaintext. The underlying idea
is to design the mode so that modifying the ciphertext without invalidating the added redundancy is
impossible without knowledge of the encryption key. Such modes are the theme of section 9.6.5 of
Menezes, van Oorschot and Vanstone's landmark book \cite{Menezes97}. Two main methods for adding
redundancy have been proposed:
\begin{itemize}
\item add a fixed block (or blocks) at the end of the plaintext, which may be public or secret (in the latter case the block acts as an auxiliary key);
\item append to the plaintext some easily computed and simple (public) function of the
    plaintext.
 \end{itemize}
In either case we refer to the block added to the end of the plaintext as a \emph{Manipulation
Detection Code (MDC)}\@. Whichever approach is adopted, the method for computing the MDC needs to
be simple, or it offers no advantage over the more conventional `encrypt then MAC' approach. In all
the modes we examine, the MDV (also known as an \emph{Integrity Control Value} (ICV)) is defined to
be a fixed, possibly secret, final plaintext block.

The first of the three schemes we examine is known as PES-PCBC\@.  The name derives from it being a
version of PCBC mode designed specifically for use in a Privacy Enhanced Socket (PES) protocol.
There are actually a number of modes known as PCBC (Plaintext-Ciphertext Block Chaining); the
version on which PES-PCBC was based is the one incorporated in Kerberos version 4. This mode was
shown to be insecure for the purposes of integrity protection by Kohl, \cite{Kohl90}. For a
discussion of the weaknesses of other variants of PCBC see \cite{Mitchell05}. The design goal for
PES-PCBC was to enhance the security of PCBC by preventing the known attacks. This scheme and its
properties are discussed in Section~\ref{s-PES-PCBC}.

The second scheme, known as IOBC (short for \emph{Input and Output Block Chaining}) was published
in 1996 by Rechacha \cite{Recacha96}. IOBC is a straightforward variant of PES-PCBC. The paper
describing IOBC was originally published in Spanish, and it wasn't until an English language
translation was kindly provided by the author in around 2013\footnote{See
\url{https://inputoutputblockchaining.blogspot.com/}} that any further discussion of the scheme
appeared. This scheme and its properties are discussed in Section~\ref{s-IOBC}.

The last of the three schemes, known as EPBC (short for \emph{Efficient     Error-Propagating Block
Chaining}) was published in 1997 by Z\'{u}quete and Gudes \cite{Zuquete97}. It is very similar to
IOBC, and is designed to be used in exactly the same way. The design goal was to address issues in
IOBC which restricted its use to relatively short messages. A possible method of cryptanalysis
allowing certificational forgeries was published in 2007 \cite{Mitchell07}, although Di et al.\
\cite{Di15} showed in 2015 that the attack does not work as claimed. The scheme and its level of
security are discussed in Section~\ref{s-EPBC}.

In Section~\ref{s-other} we examine other more general attacks which apply to all, or at least
large classes of, schemes sharing the same underlying structure as PES-PCBC, IOBC and EPBC\@. In
doing so we establish that all three modes suffer from attacks, and so they should not be adopted.
We conclude the paper in Sections~\ref{s-othermodes} and \ref{s-conclusions} by first briefly
mentioning two other related modes (and their analyses), and then summarising the main conclusions
that can be drawn from the analyses given.

\section{Preliminaries}

We start by introducing some notation and assumptions. All three modes operate using a block cipher. We write:
\begin{itemize}
\item $n$ for the plaintext/ciphertext block length of this cipher;
\item $e_K(P)$ for the result of block cipher encrypting the $n$-bit block $P$ using the secret key $K$;
\item $d_K(C)$ for the result of block cipher decrypting the $n$-bit block $C$ using the key $K$; and
\item $\oplus$ for bit-wise exclusive-or.
\end{itemize}
Finally we suppose the plaintext to be protected using the mode of operation is divided into a sequence of $n$-bit blocks (if necessary, having first been padded): $P_1,P_2,\ldots,P_t$, where $P_t$ is equal to the MDC.

\section{PES-PCBC}  \label{s-PES-PCBC}

The description follows Z\'{u}quete and Guedes \cite{Zuquete96}, although we use the notation of
\cite{Zuquete97}. The scheme uses two secret $n$-bit Initialisation Vectors (IVs), denoted by $F_0$
and $G_0$.  The nature of their intended use is not described in \cite{Zuquete96}; however it is
stated in \cite{Zuquete97} that the `initial values of $F_{i-1}$ and $G_{i-1}$ are distinct, secret
initialisation vectors', which is what we assume below.

\subsection{PES-PCBC operation}

The PES-PCBC encryption of the plaintext $P_1,P_2,\ldots,P_t$ operates as follows:
\begin{eqnarray*}
G_i & = & P_i \oplus F_{i-1},\:\:\:(1\leq i\leq t), \label{PES-GPF} \\
F_i & = & e_K(G_i),\:\:\:(1\leq i\leq t), \\
C_i & = & F_i \oplus G_{i-1},\:\:\:(1\leq i\leq t). \label{PES-CFG}
\end{eqnarray*}
The resulting ciphertext is $C_1,C_2,\ldots,C_t$.

The operation of the mode (when used for encryption) is shown in Figure~\ref{fig:PES-PCBC}. Note that we refer to the values $F_i$ and $G_i$ as `internal' values, as they are computed during encryption, but they do not constitute part of the ciphertext.

\begin{figure}
\resizebox{\textwidth}{!}
{\includegraphics*[0.5cm,1.7cm][19.5cm,13.5cm]{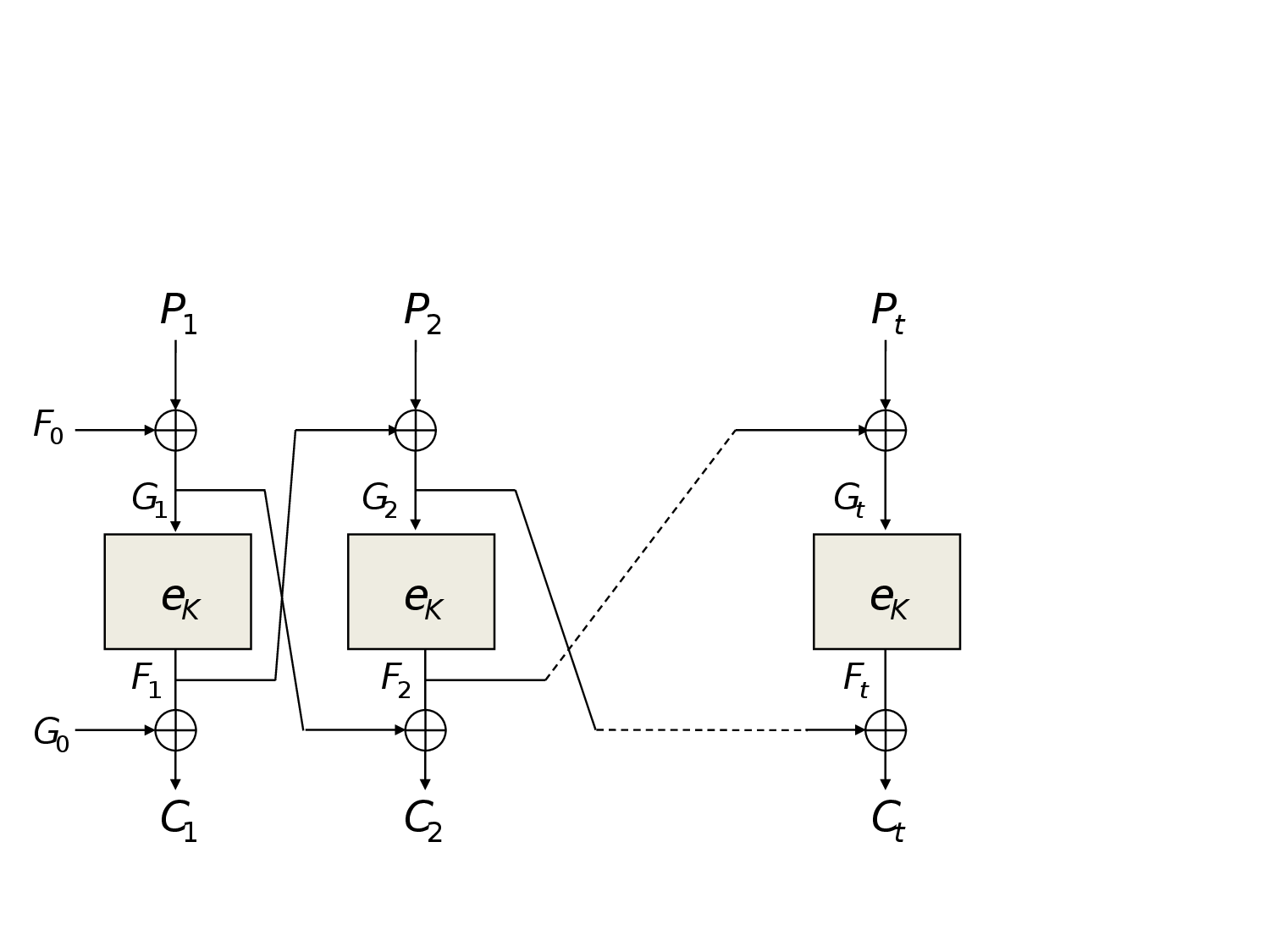}}
\caption{PES-PCBC encryption}  \label{fig:PES-PCBC}
\end{figure}

\subsection{Cryptanalysis}  \label{ss-PES-PCBC-cryptanalysis}

It is clear that weaknesses in PES-PCBC when applied for integrity protection were discovered soon
after its publication in 1996.  The 1997 paper by Z\'{u}quete and Guedes, \cite{Zuquete97}, briefly
outlines a known-plaintext attack allowing simple forgeries. We next describe a slightly simplified
variant of this attack, requiring just one known plaintext block instead of two.

First observe that, using the same notation as before, PES-PCBC decryption operates as follows:
\begin{eqnarray*}
F_i & = & C_i \oplus G_{i-1},\:\:\:(1\leq i\leq t), \label{PES-DCFG} \\
G_i & = & d_K(F_i),\:\:\:(1\leq i\leq t), \\
P_i & = & G_i \oplus F_{i-1},\:\:\:(1\leq i\leq t), \label{PES-DGPF},
\end{eqnarray*}
and the receiver of an encrypted message will accept it as genuine if the final recovered plaintext
block $P_t$ equals the expected MDC.

The following result captures the attack.

\begin{theorem}  \label{t-PES-PCBC}
Suppose the ciphertext $C_1,C_2,\ldots,C_t$ was constructed using PES-PCBC from the plaintext
$P_1,P_2,\ldots,P_t$, and that $j$ satisfies $1<j<t$. Suppose the $(t+2)$-block ciphertext
$C'_1,C'_2,\ldots,C'_{t+2}$ is constructed as follows:
\begin{eqnarray*}
C'_i & = & C_i,\:\:\:(1\leq i\leq j), \label{PES-fake1} \\
C'_{j+1} & = & P_j, \\
C'_i & = & C_{i-2},\:\:\:(j+2\leq i\leq t+2). \label{PES-fake2}
\end{eqnarray*}
When decrypted to yield $P'_1,P'_2,\ldots,P'_{t+2}$, the value of the final plaintext block
$P'_{t+2}$ will equal $P_t$ for the original (untampered) message, and hence will pass the
integrity check.
\end{theorem}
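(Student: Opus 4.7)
The plan is to track the internal decryption state $(F'_i, G'_i)$ of the forged ciphertext and show that, after a two-block ``detour'' starting at position $j+1$, the state at position $j+2$ exactly matches the state at position $j$ in the decryption of the original ciphertext. Once re-synchronised, the remaining blocks $C'_{j+3},\ldots,C'_{t+2}$ (which are just $C_{j+1},\ldots,C_t$) decrypt identically to the original, so $P'_{t+2}=P_t$ as required.

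First I would handle the trivial prefix: since the IVs $F'_0=F_0$, $G'_0=G_0$ are unchanged and $C'_i=C_i$ for $i\leq j$, an easy induction on the decryption equations gives $F'_i=F_i$, $G'_i=G_i$, $P'_i=P_i$ for all $1\leq i\leq j$. The work is then in the two ``glue'' blocks. For $i=j+1$, substitute $C'_{j+1}=P_j$ into $F'_{j+1}=C'_{j+1}\oplus G'_j$ and use the encryption identity $G_j = P_j \oplus F_{j-1}$ to obtain $F'_{j+1}=F_{j-1}$, whence $G'_{j+1}=d_K(F_{j-1})=G_{j-1}$. For $i=j+2$, substitute $C'_{j+2}=C_j$ into $F'_{j+2}=C'_{j+2}\oplus G'_{j+1}$ and use the encryption identity $C_j = F_j \oplus G_{j-1}$ to obtain $F'_{j+2}=F_j$, whence $G'_{j+2}=G_j$.

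Having established that the pair $(F'_{j+2},G'_{j+2})$ equals $(F_j,G_j)$, I would note that the subsequent decryption depends only on this state and on the forthcoming ciphertext blocks. Since $C'_{j+2+k}=C_{j+k}$ for $1\leq k\leq t-j$, a second induction shows $F'_{j+2+k}=F_{j+k}$ and $G'_{j+2+k}=G_{j+k}$ for all such $k$, and in particular $P'_{j+2+k}=G'_{j+2+k}\oplus F'_{j+1+k}=G_{j+k}\oplus F_{j+k-1}=P_{j+k}$ for $k\geq 2$. Setting $k=t-j$ yields $P'_{t+2}=P_t$, so the forged message passes the MDC check.

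The only subtle point is making sure the two inductions are kept distinct and that the ``one-step lag'' in the plaintext recovery equation $P'_i = G'_i \oplus F'_{i-1}$ is handled carefully at the junction; the hypothesis $j>1$ ensures $F_{j-1}$ is well-defined via the standard recurrence (rather than being the secret IV used directly in plaintext recovery), and $j<t$ ensures there is at least one original ciphertext block left to copy after the insertion, so the construction is non-degenerate. No further cryptographic assumption on $e_K$ is needed --- the argument is purely the algebra of the chaining relations.
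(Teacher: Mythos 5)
Your proposal is correct and follows essentially the same route as the paper's proof: propagate the unchanged prefix, show $F'_{j+1}=F_{j-1}$ and $G'_{j+1}=G_{j-1}$ via $G_j=P_j\oplus F_{j-1}$, then $F'_{j+2}=F_j$ and $G'_{j+2}=G_j$ via $C_j=F_j\oplus G_{j-1}$, and conclude that the tail decrypts as in the original (the paper simply asserts this last step where you spell out the induction). The only nit is that your tail identity $P'_{j+2+k}=P_{j+k}$ in fact holds for all $k\geq 1$ (since $F'_{j+2}=F_j$), which matters when $j=t-1$ so that $k=t-j=1$; as stated, your range $k\geq 2$ would not cover that case, though the computation is identical.
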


\begin{proof}
In the discussion below we refer to the `internal values' generated during decryption of
$C'_1,C'_2,\ldots,C'_{t+2}$ as $F'_i$ and $G'_i$. First note that, trivially:
$F'_i = F_i,~~G'_i = G_i$, and $P'_i = P_i$, ($1\leq i\leq j$). Next observe that
\begin{eqnarray*}
F'_{j+1} & = & C'_{j+1}\oplus G'_j \\
& = & P_j \oplus G_j \\
& = & F_{j-1}.
\end{eqnarray*}
Hence $G'_{j+1} = d_K(F'_{j+1}) = d_K(F_{j-1}) = G_{j-1}$. Finally, we have
$P'_{j+1} = G'_{j+1}\oplus F'_j = G_{j-1}\oplus F_j = C_j$. Since $C'_{j+2} = C_j$, $F'_{j+1}=F_{j-1}$ and $G'_{j+1} = G_{j-1}$, it is immediate that
$F'_{j+2}=F_j$, $G'_{j+2} = G_j$, and $P'_{j+2} = P_j$, and the desired result
follows. \qed
\end{proof}

\subsection{Impact}

The above attack shows that, given just one PES-PCBC-encrypted message and knowledge of only a
single plaintext block for this encrypted message, a `fake' message can be constructed that will be
guaranteed to pass the integrity checks. This fact meant that it has been recognised since 1996/97
that PES-PCBC should not be used.

Before proceeding note that the originally proposed context of use for PES-PCBC, as described in \cite{Zuquete96}, involved including an encoded version of the message length in the first plaintext block. In such a case the attack described in Theorem~\ref{t-PES-PCBC} will not work since it involves lengthening the message by two blocks. However, a slightly more involved version of the Theorem~\ref{t-PES-PCBC} attack (outlined in \cite{Zuquete97}) avoids changing the message length and hence works even if the message length is encoded in the plaintext --- at the cost of requiring knowledge of two plaintext blocks instead of one.

\section{IOBC}  \label{s-IOBC}

The IOBC mode was published in 1996 by Recacha \cite{Recacha96}, the same year as the publication
of PES-PCBC\@.  One might reasonably conclude that the design of IOBC, as a modification to
PES-PCBC, was motivated by the weaknesses in PES-PCBC, but curiously the Recacha paper does not
even mention PES-PCBC\@. Certainly, the inclusion of the function $g$ in the feedback stops the
attack on PES-PCBC working --- at least in a naive way.

\subsection{IOBC operation}  \label{ss-IOBC-operation}

We start by describing the operation of the IOBC mode of operation. We base the description on
Recacha's 1996 paper \cite{Recacha96}, although we use the same notation as in the description of
PES-PCBC\@.  We suppose that the cipher block length $n$ is a multiple of four (as is the case for
almost all commonly used schemes), and put $n=2m$ where $m$ is even. The scheme uses two secret $n$-bit IVs, denoted by $F_0$ and $G_0$.  The nature of the intended restrictions on their use is not altogether clear; one suggestion in the original Recacha paper \cite{Recacha96} is that they should be generated as follows.

Suppose $K'$ is an auxiliary key used solely for generating the IVs.  Suppose also that $S$ is a
sequence number, managed so that different values are used for every message.  Then $F_0=e_{K'}(S)$
and $G_0=e_{K'}(F_0)$.  For the purposes of this paper we assume that $F_0$ and $G_0$ are always
generated this way, and thus the scheme can be thought of as employing a pair of block cipher keys
and a non-secret, non-repeating, sequence number (which must be carefully managed to prevent
accidental re-use of sequence number values).  Note that special measures will need to be taken if
the same key is to be used to encrypt communications in both directions between a pair of parties.
Avoiding sequence number re-use in such a case could be achieved by requiring one party to start
the sequence number they use for encryption at a large value, perhaps halfway through the range.

The IOBC encryption of the plaintext $P_1,P_2,\ldots,P_t$ operates as follows:
\begin{eqnarray*}
G_i & = & P_i \oplus F_{i-1},\:\:\:(1\leq i\leq t), \label{GPF} \\
F_i & = & e_K(G_i),\:\:\:(1\leq i\leq t), \\
C_i & = & F_i \oplus g(G_{i-1}),\:\:\:(2\leq i\leq t), \label{CFG}
\end{eqnarray*}
where $C_1 = F_1\oplus G_0$ and $g$ is a function that maps an $n$-bit block to an $n$-bit block,
defined below.  The operation of the mode (when used for encryption) is shown in
Figure~\ref{fig:mode}.

\begin{figure}
\resizebox{\textwidth}{!}
{\includegraphics*[0.5cm,1.7cm][19.5cm,13.5cm]{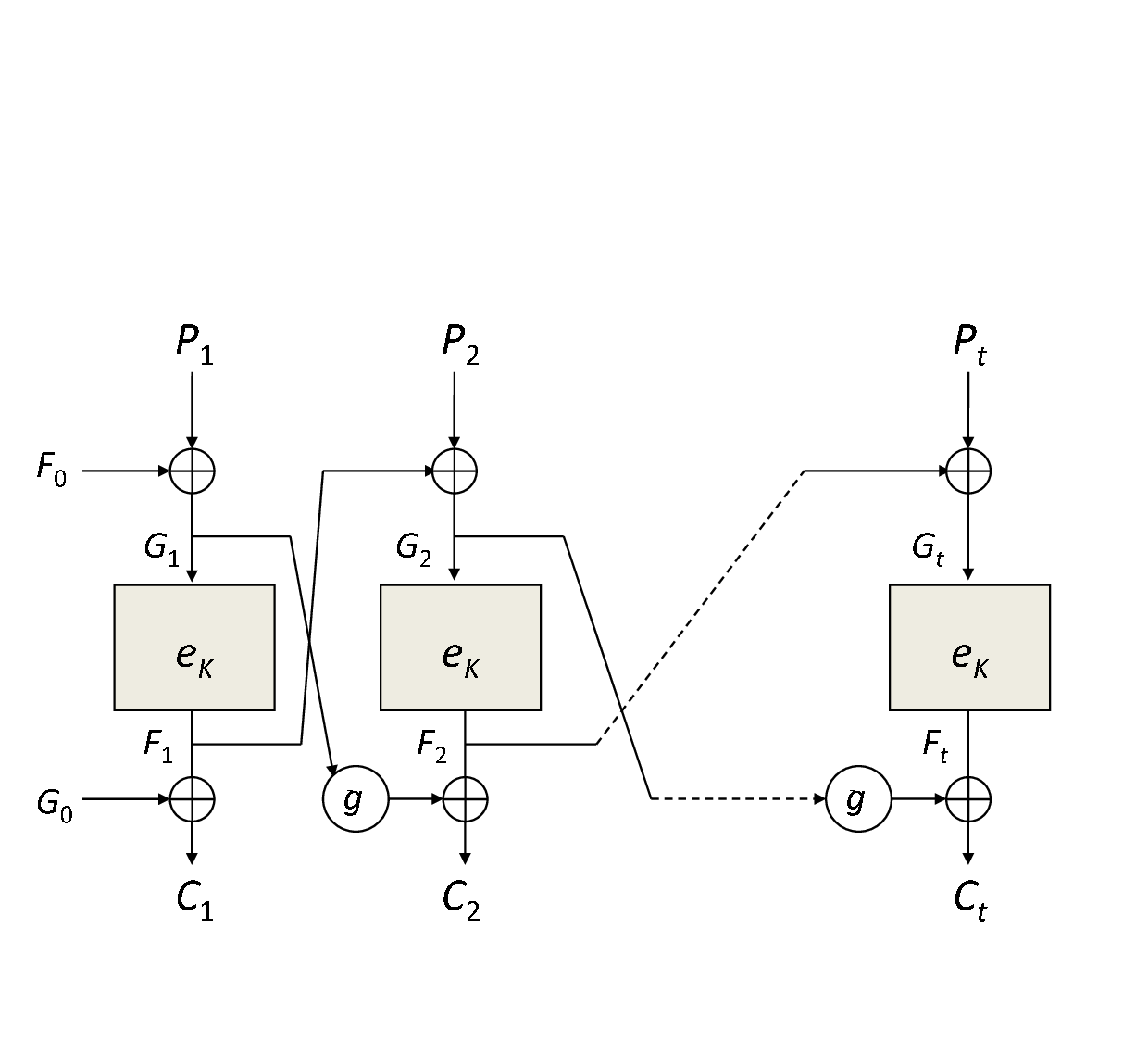}}
\caption{IOBC encryption}  \label{fig:mode}
\end{figure}

The function $g$ is defined as follows.  Suppose $X$ is an $n$-bit block, where $n=2m$.  Suppose
also that $X=L||R$ where $L$ is the leftmost $m-1$ bits of $X$ and $R$ is the rightmost
$m+1$ bits of $X$ (and, as throughout, $||$ denotes concatenation). Then
\[ g(X) = (>_1(L))||(>_1(R)) \]
where $>_i$ denotes a rightwards (cyclic) shift by $i$ bit positions.

Decryption operates similarly.  We have:
\begin{eqnarray*}
F_i & = & C_i \oplus g(G_{i-1}),\:\:\:(2\leq i\leq t), \\
G_i & = & d_K(F_i),\:\:\:(1\leq i\leq t), \\
P_i & = & G_i \oplus F_{i-1},\:\:\:(1\leq i\leq t).
\end{eqnarray*}
and $F_1 = C_1\oplus G_0$, where $d$ denotes block cipher decryption.

It should be clear that PES-PCBC is the same as IOBC with the exception that in PES-PCBC the function $g$ is the identity function.

\subsection{Remarks on use}  \label{ss-IOBC-remarks}

As described above, we assume throughout that the IVs $F_0$ and $G_0$ are derived by
ECB-mode-encrypting a sequence number using a secondary key. Thus the ciphertext blocks will be a
function of this serial number (as well as the pair of keys used).  We thus write $[S], C_1, C_2,
\ldots, C_t$ for a sequence of ciphertext blocks, meaning that $C_1, C_2, \ldots, C_t$ were
encrypted using the sequence number $S$.  This is logical, since the sequence number will need to
be sent or stored with the ciphertext to enable correct decryption.

IOBC should only be used with relatively short messages.  As specified by Recacha \cite{Recacha96}
(and for reasons which become clear below), a message to be encrypted using IOBC shall contain at
most $n^2/2-1$ plaintext blocks, where $n$ is the plaintext block length. Thus for $n=64$ and
$n=128$, the two most commonly used block lengths, a message shall contain at most 2047 and 8191
blocks, respectively.

As with all modes we discuss here, data integrity is achieved by adding an MDC to the end of the
plaintext.

\subsection{Cryptanalysis}  \label{ss-IOBC-cryptanalysis}

We start by giving a simple result that is implicit in Recacha \cite{Recacha96}. It is interesting
to note that this result applies regardless of the choice of the choice of function $g$, i.e.\ to
any mode operating as in Figure~\ref{fig:mode}.

\begin{lemma}[Mitchell, \cite{Mitchell13}]  \label{splicing-lemma}
Suppose $[S], C_1, C_2, \ldots, C_t$ and $[S'], C'_1, C'_2, \ldots, C'_{t'}$ are IOBC encrypted
versions of the plaintext sequences $P_1, P_2, \ldots, P_t$ and $P'_1, P'_2, \ldots, P'_{t'}$,
respectively.  If the ciphertext:
\begin{eqnarray*}
& [S'], C^*_1, C^*_2, \ldots, C^*_{t-v+u} = \\
& [S'], C'_1, C'_2, \ldots, C'_{u-1}, C_v\oplus g(G'_{u-1})\oplus
g(G_{v-1}), C_{v+1}, \ldots, C_t
\end{eqnarray*}
is submitted for IOBC decryption (where $1<u<t'$ and $1<v<t$, and $G_{v-1}$ and $G'_{u-1}$ are values computed during the encryption of the respective sequences of blocks), then the resulting sequence
of plaintext blocks $P^*_1, P^*_2, \ldots, P^*_{t-v+u}$ will be equal to
\[ P'_1, P'_2, \ldots, P'_{u-1}, P_v \oplus F'_{u-1} \oplus F_{v-1}, P_{v+1}, P_{v+2}, \ldots, P_t.
\]
\end{lemma}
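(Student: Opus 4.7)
The plan is to trace the IOBC decryption of $[S'], C^*_1, \ldots, C^*_{t-v+u}$ block by block, showing at each stage how the internal values $F^*_i, G^*_i$ relate to the internal values of the two original encryptions. I will use the decryption equations $F_i = C_i \oplus g(G_{i-1})$, $G_i = d_K(F_i)$, $P_i = G_i \oplus F_{i-1}$ (and $F_1 = C_1 \oplus G_0$) stated in Section~\ref{ss-IOBC-operation}.

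First I would observe that for $1 \leq i \leq u-1$ the IV $S'$ and the ciphertext blocks $C^*_i = C'_i$ exactly match those used in the original decryption of the second message, so a trivial induction gives $F^*_i = F'_i$, $G^*_i = G'_i$ and $P^*_i = P'_i$ on this prefix. The heart of the proof is the single spliced block at position $u$: I compute
\begin{eqnarray*}
F^*_u & = & C^*_u \oplus g(G^*_{u-1}) \\
      & = & \bigl(C_v \oplus g(G'_{u-1}) \oplus g(G_{v-1})\bigr) \oplus g(G'_{u-1}) \\
      & = & C_v \oplus g(G_{v-1}) \\
      & = & F_v,
\end{eqnarray*}
where the last equality uses the encryption equation $C_v = F_v \oplus g(G_{v-1})$ for the first message. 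Then $G^*_u = d_K(F^*_u) = G_v$, and $P^*_u = G^*_u \oplus F^*_{u-1} = G_v \oplus F'_{u-1}$, which after substituting $G_v = P_v \oplus F_{v-1}$ gives the displayed value $P_v \oplus F'_{u-1} \oplus F_{v-1}$.

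For the tail $i > u$, the ciphertext is $C^*_{u+k} = C_{v+k}$ for $k \geq 1$, and I now have $G^*_u = G_v$. A straightforward induction on $k$ then shows $F^*_{u+k} = F_{v+k}$ and $G^*_{u+k} = G_{v+k}$, since the recurrences for the two decryptions are driven by identical ciphertext and identical feedback values. Consequently $P^*_{u+k} = G^*_{u+k} \oplus F^*_{u+k-1} = G_{v+k} \oplus F_{v+k-1} = P_{v+k}$ for $k \geq 1$, which completes the claimed output.

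The only non-routine step is spotting why the particular splice block $C_v \oplus g(G'_{u-1}) \oplus g(G_{v-1})$ works: it is precisely engineered so that the $g(G'_{u-1})$ feedback that the decryptor will subtract at position $u$ is cancelled and replaced by the $g(G_{v-1})$ feedback that would have been present in the original first encryption, making $F^*_u$ coincide with $F_v$. Once that one block is handled, the entire tail ``locks onto'' the internal state of the first message and the rest is bookkeeping.
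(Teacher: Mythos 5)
Your proof is correct: the prefix, splice, and tail computations all check out against the IOBC encryption/decryption equations (and the conditions $u>1$, $v>1$ ensure the $g$-feedback form of the recurrence applies at every position you use it). The paper itself states this lemma without proof, citing \cite{Mitchell13}; your block-by-block trace is the standard argument and matches how the paper uses the result elsewhere (e.g.\ in Lemma~\ref{generalised-splicing-lemma}).
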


Lemma~\ref{splicing-lemma} suggests a way of forging an IOBC-encrypted message so that the final
block will contain the correct MDC\@.  However, the problem remains of discovering
$g(G'_{u-1})\oplus g(G_{v-1})$, as this is used in constructing the forged ciphertext in the
statement of the lemma. Recacha \cite{Recacha96} discusses this very point, and explains that
making this difficult motivates the inclusion of the function $g$ in the design of IOBC --- that
is, if $g$ was not included (as is the case for PES-PCBC), then simple forgeries could be achieved.

We also have the following, also implicit in Recacha's 1996 paper.

\begin{lemma}[Mitchell, \cite{Mitchell13}]  \label{induction}
Suppose $[S], C_1, C_2, \ldots, C_t$ is the encryption of $P_1, P_2, \ldots, P_t$ using IOBC, and
that $F_i$ and $G_i$ are as defined in Section~\ref{ss-IOBC-operation}.  Then:
\begin{enumerate}
\item[(i)] $C_{j+1}\oplus P_{j+2}=g(G_j)\oplus G_{j+2}$,~~$1\leq j\leq t-2$;
\item[(ii)] $\bigoplus_{i=1}^k g^{k-i}(C_{j+2i-1}\oplus P_{j+2i}) = g^k(G_j) \oplus
    G_{j+2k}$,~~$1\leq j\leq t-2$,~$1\leq k\leq(t-j)/2$.
\end{enumerate}
\end{lemma}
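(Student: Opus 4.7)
The plan is to prove part (i) by a direct substitution using the defining recursions of IOBC, and then obtain part (ii) by induction on $k$, exploiting the fact that the shift-based map $g$ is linear over $\mathbb{F}_2$ and hence commutes with $\oplus$.

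For part (i), I would expand $C_{j+1}$ and $P_{j+2}$ using the IOBC equations. From the encryption formulas, $C_{j+1} = F_{j+1} \oplus g(G_j)$, and from the decryption/encryption identity $P_{j+2} = G_{j+2} \oplus F_{j+1}$. XOR-ing these cancels the common $F_{j+1}$ term and immediately yields $C_{j+1} \oplus P_{j+2} = g(G_j) \oplus G_{j+2}$, which is the claimed identity. The side condition $1 \le j \le t-2$ is exactly what is needed for all indexed quantities to be defined.

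For part (ii), I would argue by induction on $k$. The base case $k = 1$ is nothing but part (i). For the inductive step, assuming the identity holds for $k$, I would split off the $i = k+1$ term in the sum
\begin{equation*}
\bigoplus_{i=1}^{k+1} g^{k+1-i}\bigl(C_{j+2i-1} \oplus P_{j+2i}\bigr)
= g\!\left(\bigoplus_{i=1}^{k} g^{k-i}\bigl(C_{j+2i-1} \oplus P_{j+2i}\bigr)\right) \oplus \bigl(C_{j+2k+1} \oplus P_{j+2k+2}\bigr),
\end{equation*}
where pulling $g$ outside the XOR uses the linearity of $g$ over $\mathbb{F}_2$ (the two cyclic shifts that make up $g$ are $\mathbb{F}_2$-linear maps). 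Applying the inductive hypothesis to the bracketed sum gives $g^{k+1}(G_j) \oplus g(G_{j+2k})$, and applying part (i) with $j$ replaced by $j+2k$ rewrites the remaining term as $g(G_{j+2k}) \oplus G_{j+2k+2}$. The two copies of $g(G_{j+2k})$ cancel, leaving $g^{k+1}(G_j) \oplus G_{j+2k+2}$, which is the target expression for $k+1$. The upper bound $k \le (t-j)/2$ guarantees that the extended indices $j+2k+1, j+2k+2$ remain within the valid range.

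The main obstacle, such as it is, is recognising the linearity of $g$: without the observation that $g(A \oplus B) = g(A) \oplus g(B)$, one could not factor $g$ out of the telescoping sum, and the induction would not close. Once this point is noted, both parts are mechanical.
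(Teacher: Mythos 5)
Your proof is correct. The paper itself gives no proof of this lemma (it is quoted from \cite{Mitchell13} and stated without proof), but your argument --- part (i) by cancelling the common $F_{j+1}$ term in $C_{j+1}=F_{j+1}\oplus g(G_j)$ and $P_{j+2}=G_{j+2}\oplus F_{j+1}$, then part (ii) by induction on $k$ using the $\mathbb{F}_2$-linearity of $g$ (which the paper explicitly confirms for IOBC in Section~\ref{ss-IV-attack}) --- is the natural and evidently intended derivation, with the index bounds handled correctly.
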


It is not hard to see that if $g^k(G_j)=G_j$ for some $k$, then Lemma~\ref{induction}(ii) could be
combined with Lemma~\ref{splicing-lemma} to yield a forgery attack (given a ciphertext message with
corresponding known plaintext).  This point is made by Recacha \cite{Recacha96}, who explains that
the bit permutation $g$ has been chosen so that the smallest integer $i>1$ such that $g^i$ is the
identity permutation is $(n/2-1)(n/2+1)=n^2/4-1$ (since $m=n/2$ is even).  The restriction that the maximum length of
messages encrypted using IOBC is $n^2/2-1$, as defined in Section~\ref{ss-IOBC-remarks}, prevents
this problem arising in practice.

However, in some cases $g^k$ is `close' to the identity permutation for somewhat smaller values of
$k$.  The following result highlights this for two practically important values of $n$.   Observe
that analogous results can be achieved for any $n$.

\begin{lemma}[Mitchell, \cite{Mitchell13}]  \label{lemma64_128}
Suppose $X$ is a randomly selected $n$-bit block.
\begin{enumerate}
\item[(i)] If $n=64$ then $\Pr(X=g^{341}(X))=2^{-22}$; and
\item[(ii)] if $n=128$ then $\Pr(X=g^{1365}(X))=2^{-42}$.
\end{enumerate}
\end{lemma}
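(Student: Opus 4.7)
The central observation is that $g$ acts \emph{independently} on the two halves of $X$: it cyclically shifts the left $(m-1)$-bit part $L$ by one position, and the right $(m+1)$-bit part $R$ by one position. Hence $g^k(X) = (>_k L)\,\|\,(>_k R)$, with the rotations interpreted modulo $m-1$ and $m+1$ respectively. Consequently, the event $X = g^k(X)$ decomposes as the conjunction of $L = >_k L$ and $R = >_k R$. Since $L$ and $R$ are independent and uniform when $X$ is uniform, the probability factorises.

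First I would reduce $k$ modulo $m-1$ and $m+1$. For case (i), $n=64$, so $m-1=31$ and $m+1=33$; a direct check gives $341 = 11\cdot 31$, so $341\equiv 0\pmod{31}$ and $341\equiv 11\pmod{33}$. For case (ii), $n=128$, so $m-1=63$ and $m+1=65$; here $1365 = 21\cdot 65$, so $1365\equiv 0\pmod{65}$ and $1365\equiv 42\pmod{63}$. These particular values of $k$ are chosen so that $g^k$ acts as the identity on one half while acting as a nontrivial rotation with large $\gcd$ on the other half.

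Next I would invoke the standard fact that an $\ell$-bit string $Y$ satisfies $Y = >_a Y$ iff $Y$ has period dividing $\gcd(a,\ell)$; the number of such $Y$ is $2^{\gcd(a,\ell)}$, so for uniform $Y$,
\[ \Pr(Y = >_a Y) = 2^{\gcd(a,\ell)-\ell}. \]
Applied to case (i): the left-half condition holds trivially (rotation by $0$ on $31$ bits); the right-half condition has $\gcd(11,33)=11$, giving probability $2^{11-33}=2^{-22}$. Applied to case (ii): the right-half condition is trivial; the left-half condition has $\gcd(42,63)=21$, giving $2^{21-63}=2^{-42}$. Multiplying the two independent probabilities in each case yields the stated results.

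The argument is essentially routine once the splitting into independent half-rotations is observed; there is no real obstacle. The only things demanding care are the arithmetic reductions ($341 = 11\cdot 31$ and $1365 = 21\cdot 65$) and the $\gcd$ computations, which together explain why $341$ and $1365$ were singled out: they make $g^k$ `close' to the identity in the sense that one half is fixed outright and the other is rotated by an amount sharing a large common factor with the half-length.
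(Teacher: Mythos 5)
Your proposal is correct, and the arithmetic all checks out: $341=11\cdot 31$ so $g^{341}$ fixes the $31$-bit left half and rotates the $33$-bit right half by $11$, giving $2^{\gcd(11,33)-33}=2^{-22}$; and $1365=21\cdot 65$ so $g^{1365}$ fixes the $65$-bit right half and rotates the $63$-bit left half by $42$, giving $2^{\gcd(42,63)-63}=2^{-42}$. Note that the paper itself states this lemma without proof, citing \cite{Mitchell13}, so there is no in-paper argument to compare against; your route --- decomposing $g^k$ into independent cyclic shifts of the $(m-1)$- and $(m+1)$-bit halves, reducing $k$ modulo each half-length, and counting the $2^{\gcd(a,\ell)}$ strings invariant under a shift by $a$ --- is the standard one and matches the structural facts the paper does record about $g$ (in particular that the order of $g$ is $(n/2-1)(n/2+1)$, i.e.\ the lcm of the two half-lengths).
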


The above result can now be used in a straightforward way to enable message forgeries. As described
in detail in \cite{Mitchell13}, if $n=64$, given an IOBC ciphertext containing at least 685 blocks
and some of the corresponding plaintext, then it is possible to create a forged ciphertext that
will pass integrity checks with probability $2^{-22}$. 685 is clearly much less than the defined
message length limit of 2047 blocks. A precisely analogous attack works for $n=128$, although the
success probability is only $2^{-42}$.

\subsection{Impact}

The attack outlined immediately above could be prevented by further curtailing the maximum length
for messages, but this would in turn further limit the applicability of the scheme. Moreover, the
lack of a formal proof of security means that other attacks are possible.  Indeed, a simple chosen
plaintext forgery attack was outlined in \cite{Mitchell13}, although it requires approaching
$2^{n/2}$ ciphertexts for chosen plaintexts (this attack is discussed further in
Section~\ref{ss-EPBC-chosen} below). These points strongly argue against adoption of this scheme.

\section{EPBC}  \label{s-EPBC}

The EPBC scheme was proposed by Z\'{u}quete and Guedes \cite{Zuquete97} the year after the
publication of IOBC\@. The primary design goal was to remove the message length restriction
inherent in the design of IOBC; it also enables a small efficiency improvement.  It was further
claimed by its authors to be more secure than IOBC.

\subsection{EPBC operation}  \label{ss-EPBC-operation}

The scheme operates in a very similar way to IOBC, exactly as shown in Figure~\ref{fig:mode}, and (like IOBC) requires that $n$ is even. The
only significant difference is in the choice of the function $g$.  The function $g$ for EPBC is not
bijective, unlike in IOBC, and operates as follows. Suppose $X$ is an $n$-bit block, where $X=L||R$
and $L$ and $R$ are $m$-bit blocks (and, as throughout, $||$ denotes concatenation). Then
\[ g(X) = (L \vee \overline{R})||(L \wedge \overline{R})
\]
where $\vee$ denotes the bit-wise inclusive or operation, $\wedge$ denotes the bit-wise logical and
operation, and $\overline{X}$ denotes the logical negation of $X$ (i.e.\ changing every zero to a
one and vice versa).

Much like with PES-PCBC, the IVs $F_0$, $G_0$ are required to be distinct, secret initialisation values.

\subsection{A flawed cryptanalysis}  \label{ss-EPBC-cryptanalysis}

We first give some simple results on $g$.

\begin{lemma}[Mitchell, \cite{Mitchell07}]  \label{g_output}
Suppose $g(X)=L'||R'$, where $X$ is an $n$-bit block and we let
$L'=(\ell'_1,\ell'_2,\ldots,\ell'_m)$ and $R'=(r'_1,r'_2,\ldots,r'_m)$ be $m$-bit blocks.  Then,
for every $i$ ($1\leq i\leq m$), if $\ell'_i=0$ then $r'_i=0$.
\end{lemma}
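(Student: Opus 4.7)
The plan is to prove this by direct bitwise case analysis, exploiting the definition of $g$. Writing $X = L \| R$ with $L = (\ell_1,\ldots,\ell_m)$ and $R = (r_1,\ldots,r_m)$, the definition $g(X) = (L \vee \overline{R}) \| (L \wedge \overline{R})$ tells us that, for each bit position $i$,
\[ \ell'_i = \ell_i \vee \overline{r_i} \quad \text{and} \quad r'_i = \ell_i \wedge \overline{r_i}. \]
So the claim reduces to a statement about a pair of input bits $(\ell_i, \overline{r_i})$.

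First I would fix an arbitrary $i$ with $1 \leq i \leq m$ and assume $\ell'_i = 0$. From the formula for $\ell'_i$, this means $\ell_i \vee \overline{r_i} = 0$. The main (entirely routine) step is then to observe that a bitwise OR of two bits equals $0$ only when both operands are $0$, which forces $\ell_i = 0$ and $\overline{r_i} = 0$ simultaneously.

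Substituting these two facts into the formula for $r'_i$ then gives $r'_i = 0 \wedge 0 = 0$, as required. The result follows immediately since $i$ was arbitrary. There is no real obstacle here: the lemma is essentially a restatement of the elementary identity that $a \wedge b \leq a \vee b$ (viewing bits as ordered by $0 < 1$), so whenever the $i$-th bit of the OR-half of the output vanishes the $i$-th bit of the AND-half must vanish too.
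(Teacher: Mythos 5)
Your proof is correct. The paper states this lemma without giving a proof (it is quoted from \cite{Mitchell07}), and your direct bitwise verification --- that $\ell'_i=\ell_i\vee\overline{r_i}=0$ forces $\ell_i=0$ and $\overline{r_i}=0$, whence $r'_i=\ell_i\wedge\overline{r_i}=0$ (equivalently, the AND-half is always dominated by the OR-half) --- is exactly the routine argument intended, and it agrees with the input/output table of Lemma~\ref{g_input}.
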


The above Lemma implies that output bit pairs $(\ell'_i,r'_i)$ can never be equal to (0,1).  In
fact, we can obtain the following more general result which gives Lemma~\ref{g_output} as a special
case.

\begin{lemma}[Mitchell, \cite{Mitchell07}]  \label{g_input}
Suppose that, as above, $X=L||R$ where $L=(\ell_1,\ell_2,\ldots,\ell_m)$ and
$R=(r_1,r_2,\ldots,r_m)$. Suppose also that $g(X)=L'||R'$ where
$L'=(\ell'_1,\ell'_2,\ldots,\ell'_m)$ and $R'=(r'_1,r'_2,\ldots,r'_m)$.  Then if $(\ell_i,r_i)\in
A$ then $(\ell'_i,r'_i)\in B$, where all possibilities for $A$ and $B$ are given in
Table~\ref{inout}.  Note that, for simplicity, in this table we write $xy$ instad of $(x,y)$.

\begin{table}
\caption{Input/output possibilities for $g$}  \label{inout}
\begin{center}\begin{tabular}[htb]{|c|c|}
\hline $A$ (set of input pairs) & $B$ (set of output pairs) \\
 \hline
 $\{00,01,10,11\}$ & $\{00,10,11\}$ \\
 \hline
 $\{01,10,11\}$ & $\{00,10,11\}$ \\
 $\{00,10,11\}$ & $\{10,11\}$ \\
 $\{00,01,11\}$ & $\{00,10\}$ \\
 $\{00,01,10\}$ & $\{00,10,11\}$ \\
 \hline
 $\{10,11\}$ & $\{10,11\}$ \\
 $\{01,11\}$ & $\{00,10\}$ \\
 $\{01,10\}$ & $\{00,11\}$ \\
 $\{00,11\}$ & $\{10\}$ \\
 $\{00,10\}$ & $\{10,11\}$ \\
 $\{00,01\}$ & $\{00,10\}$ \\
 \hline
 $\{11\}$ & $\{10\}$ \\
 $\{10\}$ & $\{11\}$ \\
 $\{01\}$ & $\{00\}$ \\
 $\{00\}$ & $\{10\}$ \\
 \hline
\end{tabular}
\end{center}
\end{table}
\end{lemma}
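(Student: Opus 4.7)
The plan is to exploit the fact that both $\vee$ and $\wedge$ act on bits coordinate-wise, so that each output pair $(\ell'_i, r'_i)$ is determined entirely by the single input pair $(\ell_i, r_i)$ and not by any of the other bits of $X$. From the definition $g(X) = (L \vee \overline{R})||(L \wedge \overline{R})$ one reads off $\ell'_i = \ell_i \vee \overline{r_i}$ and $r'_i = \ell_i \wedge \overline{r_i}$, giving a fixed four-element map $g_0$ on bit pairs: $(0,0) \mapsto (1,0)$, $(0,1) \mapsto (0,0)$, $(1,0) \mapsto (1,1)$, and $(1,1) \mapsto (1,0)$. I would establish these four identities first by direct substitution into the defining formula.

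Next I would observe that because $g_0$ is applied independently at each index, the set of possible output pairs at position $i$ is precisely the image under $g_0$ of the set $A$ of possible input pairs at that position. Hence the claimed correspondences in Table~\ref{inout} reduce to checking, for each subset $A$ listed, that the tabulated $B$ equals $\{\,g_0(a) : a \in A\,\}$. This is a purely mechanical verification: the table enumerates (most of) the non-empty subsets of $\{00,01,10,11\}$ and for each one we simply compute the image under a map with only four values.

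As a consistency check I would note that Lemma~\ref{g_output} falls out as a corollary: among the four outputs of $g_0$, only $(0,0)$ has $\ell'_i = 0$, and it also has $r'_i = 0$, so the pair $(0,1)$ is never produced, matching the earlier lemma. There is really no obstacle here --- the content of the lemma is the factorisation of $g$ as the parallel application of a single function on two-bit inputs, and once this is made explicit the table is just the list of images of the listed subsets. The only risk is clerical: transcribing the four base cases from the definition and then accurately computing the fifteen set images.
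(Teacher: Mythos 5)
Your proposal is correct: the four base cases $(0,0)\mapsto(1,0)$, $(0,1)\mapsto(0,0)$, $(1,0)\mapsto(1,1)$, $(1,1)\mapsto(1,0)$ follow directly from $\ell'_i=\ell_i\vee\overline{r_i}$ and $r'_i=\ell_i\wedge\overline{r_i}$, and each row of Table~\ref{inout} is indeed the image of the listed set $A$ under this coordinatewise map (the table in fact covers all fifteen non-empty subsets, not merely ``most''). The paper itself gives no proof, deferring to \cite{Mitchell07}, but your reduction of $g$ to a parallel application of a single two-bit function followed by a mechanical enumeration of set images is exactly the argument the statement invites, and your observation that Lemma~\ref{g_output} drops out because $(0,1)$ is not in the range of the two-bit map is a sound consistency check.
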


We next summarise the key part of the known-plaintext attack described in \cite{Mitchell07}. The
primary objective is to use knowledge of known plaintext/ciphertext pairs ($P_i$, $C_i$) to learn
the values of corresponding `internal pairs' ($F_i$, $G_i$). These can then be used in a fairly
straightforward way (as detailed in \cite{Mitchell07}) to construct a forged ciphertext which will
pass the integrity checks.

It is claimed in \cite{Mitchell07} that, assuming that we have sufficiently many known plaintext and ciphertext pairs, for sufficiently large $w$ there will only be one possibility for $F_{j+2w}$. Using knowledge of $P_{j+2w+1}$, this immediately gives certain knowledge of $G_{j+2w+1}$. I.e., for all sufficiently large values of $w$, complete knowledge can be obtained of $F_{j+2w}$ and $G_{j+2w+1}$.

However, more recently, Di et al.\ \cite{Di15} pointed out that the above analysis has a major
flaw.  The issue arises in the argument that, since $G_{j+1} = P_{j+1}\oplus F_j$, information
about forbidden bit pairs in $F_j$, combined with knowledge of $P_{j+1}$, gives information about
forbidden bit pairs in $G_{j+1}$.  Di et al.\ \cite{Di15} point out that if there are two
possibilities for a bit pair in $F_j$ then there will always still be two possibilities for the
corresponding bit pair in $g(G_{j+1})$ --- as opposed to the analysis in \cite{Mitchell07} which
suggests that the number of possibilities will be reduced to one with probability 1/6. That is, the
number of possibilities for a bit pair in $g(G_{j+1})$ will never go below two, preventing the
attack strategy working.

\subsection{Impact}

Di et al.\ \cite{Di15} were not able to suggest any further attacks apart from a brute force approach.  This suggests that EPBC may, after all, be secure.  However, in the next section we show otherwise.

\section{Other attacks}  \label{s-other}

We now consider other possible attacks. Given that all three modes we have considered share the
same underlying structure, as shown in Figure~\ref{fig:mode}, we focus on attacks that apply to
large classes of possibilities for the function $g$.

\subsection{A chosen plaintext forgery attack}  \label{ss-EPBC-chosen}

We start by giving an attack which will work for any function $g$, using a method outlined in
\cite{Mitchell13} --- and presented here in greater detail. This certificational
chosen-plaintext-based forgery attack limits the security of any scheme using the design shown in
Figure~\ref{fig:mode} (including IOBC and EPBC), regardless of length limits for plaintexts and the
choice of $g$.

\begin{lemma}  \label{l-certificational}
Suppose that $C'_1, C'_2, \ldots, C'_{t'}$ and $C_1, C_2, \ldots, C_{t}$ are encrypted versions of
the plaintext sequences $P'_1, P'_2, \ldots, P'_{t'}$ and $P_1, P_2, \ldots, P_{t}$, respectively,
using the same key $K$ (although the IVs may be different). Suppose also that $P'_{j}=P_{i}$
and $C'_{j}=C_{i}$ for some $j<t'$ and $i<t$. As previously we refer to the `internal values'
computed during encryption of these two messages as $F'_i$, $F_i$, $G'_i$ and $G_i$.

Then (under reasonable assumptions about the random behaviour of the block cipher) with probability
approximately 0.5 it will hold that $F'_{j-1}=F_{i-1}$, $G'_{j-1}=G_{i-1}$, $F'_{j}=F_{i}$ and $G'_{j}=G_{i}$.
\end{lemma}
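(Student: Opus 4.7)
The plan is to collapse the four claimed internal equalities into a single one, recast the hypothesis $C'_j = C_i$ as a collision of an auxiliary function, and then use a standard Bayes-style counting argument in the ideal cipher model.

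First I would verify that $F'_{j-1}=F_{i-1}$ alone forces all four claimed equalities. Because $e_K$ is a permutation, $F'_{j-1}=F_{i-1}$ yields $G'_{j-1}=d_K(F'_{j-1})=d_K(F_{i-1})=G_{i-1}$. Combining this with the hypothesis $P'_j=P_i$ gives $G'_j=P_i\oplus F'_{j-1}=P_i\oplus F_{i-1}=G_i$, and applying $e_K$ then yields $F'_j=F_i$. So it is enough to estimate the probability that $F'_{j-1}=F_{i-1}$ under the two given hypotheses.

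Next I would rewrite $C'_j=C_i$ as a collision of a single function. Setting $u=F'_{j-1}$, $v=F_{i-1}$, and defining
\[ h_K(x) = e_K(P_i\oplus x)\oplus g\bigl(d_K(x)\bigr), \]
we have $C'_j = h_K(u)$ and $C_i = h_K(v)$, so the hypothesis $C'_j=C_i$ is precisely the collision $h_K(u)=h_K(v)$, and the target event is $u=v$. Modelling $h_K$ as a random function $\{0,1\}^n\to\{0,1\}^n$ and treating $u,v$ as independent uniform $n$-bit blocks, the joint weight of $u=v$ together with $h_K(u)=h_K(v)$ is $2^{-n}$, whereas the joint weight of $u\neq v$ together with $h_K(u)=h_K(v)$ is $(1-2^{-n})\cdot 2^{-n}$; Bayes then gives
\[ \Pr\bigl[u=v \,\big|\, h_K(u)=h_K(v)\bigr] = \frac{1}{2-2^{-n}} \approx \tfrac{1}{2}. \]

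The main obstacle, and the reason the statement hedges with ``reasonable assumptions about the random behaviour of the block cipher,'' is the justification of the random-function heuristic for $h_K$: it is built from two closely related calls to the block cipher (one via $e_K$, one via $d_K$, post-composed with $g$), and the values $u$, $v$ are themselves outputs of $e_K$ on earlier internal states, so correlations in principle exist. One has to argue that for the purposes of this two-point collision count those correlations are negligible. Once the heuristic is granted, the counting step above is routine.
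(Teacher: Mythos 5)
Your proposal is correct and is essentially the paper's own argument: the paper likewise reduces everything to the single event $F'_{j-1}=F_{i-1}$ and then counts, over the $2^n$ possible values of $\Delta=F'_{j-1}\oplus F_{i-1}$, that the expected number consistent with $C'_j=C_i$ is about $2$ (the case $\Delta=0$ always, plus roughly one spurious $\Delta\neq 0$ since each contributes $2^{-n}$), which is exactly your Bayes computation $1/(2-2^{-n})$ in different bookkeeping. Your packaging of the collision condition into the auxiliary function $h_K(x)=e_K(P_i\oplus x)\oplus g(d_K(x))$ is a tidy reformulation, and your closing remark correctly identifies where the heuristic lives, but it is not a different route.
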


\begin{proof}
Let the event $E_\Delta$ be that $\Delta=F'_{j-1}\oplus F_{i-1}$.  Then, under reasonable
assumptions about randomness, $Pr(E_\Delta)$ is $2^{-n}$ for any given $\Delta$.

In the case $E_0$, we immediately have $G'_{j-1}=G_{i-1}$. Also, since $P'_{j}=P_{i}$, it follows
immediately that $F'_{j}=F_{i}$, $G'_{j}=G_{i}$ and $C'_{j}=C_{i}$ with probability 1.

Now consider the event $E_\Delta$ for $\Delta\not=0$, i.e.\ $F'_{j-1}\not= F_{i-1}$. Since
$P'_{j}=P_{i}$ this immediately implies that $G'_{j}\not=G_{i}$, and hence
$F'_{j}\not=F_{i}$. Now, since $C'_{j}=g(G'_{j-1})\oplus F'_{j}$ and $C_{i}=g(G_{i-1})\oplus F_{i}$, we have
\[ C'_{j}=C_{i}~~\mbox{if and only if}~~g(G'_{j-1})\oplus g(G_{i-1})=F'_{j}\oplus F_{i}.\]
Under reasonable assumptions about the random behaviour of the encryption function, this will occur with probability $2^{-n}$.  Hence, as $\Delta$ ranges over its $2^n$ possible values, the expected number of times that $C'_{j}=C_{j}$ will hold is approximately 2, one of which will occur when $F'_{j}=F_{i}$.  The result follows. \qed
\end{proof}

We can now give the following simple result that uses the same notation as Lemma~\ref{l-certificational}. Note that it uses Lemma~\ref{splicing-lemma}, which we already observed holds regardless of the choice of $g$.
\begin{lemma}
Suppose that $C'_1, C'_2, \ldots, C'_{t'}$ and $C_1, C_2, \ldots, C_{t}$ are as in the statement of Lemma~\ref{l-certificational}, and suppose also that $P'_{j}=P_{i}$
and $C'_{j}=C_{i}$ for some $j<t'$ and $i<t$. Then, with probability approximately 0.5, the
constructed ciphertext message
\[C'_1, C'_2, \ldots, C'_{j-1}, C_{i}, C_{i+1}, \ldots, C_t\]
will decrypt to $P'_1, P'_2, \ldots, P'_{j-1}, P_{j}, P_{j+1}, \ldots,
P_{t}$.
\end{lemma}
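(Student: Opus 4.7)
The plan is to combine Lemma~\ref{l-certificational} with Lemma~\ref{splicing-lemma}. The former, under the hypothesis $P'_j=P_i$ and $C'_j=C_i$, delivers with probability close to $1/2$ the four equalities of internal values $F'_{j-1}=F_{i-1}$, $G'_{j-1}=G_{i-1}$, $F'_j=F_i$ and $G'_j=G_i$; the latter, which is independent of the choice of $g$ and of the two IVs, lets us describe the decryption of an arbitrary splice between the two ciphertexts in closed form.

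First I would invoke Lemma~\ref{l-certificational} and condition on its favourable event (probability approximately $1/2$). Then I would apply Lemma~\ref{splicing-lemma} with $u=j$ and $v=i$: its conclusion is that the ciphertext
\[C'_1,C'_2,\ldots,C'_{j-1},\,C_i\oplus g(G'_{j-1})\oplus g(G_{i-1}),\,C_{i+1},\ldots,C_t\]
decrypts to
\[P'_1,P'_2,\ldots,P'_{j-1},\,P_i\oplus F'_{j-1}\oplus F_{i-1},\,P_{i+1},\ldots,P_t.\]
On the conditioning event the correction terms $g(G'_{j-1})\oplus g(G_{i-1})$ and $F'_{j-1}\oplus F_{i-1}$ both vanish, so the spliced ciphertext collapses to exactly the one displayed in the statement, and its decryption collapses to $P'_1,\ldots,P'_{j-1},P_i,P_{i+1},\ldots,P_t$. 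Substituting $P'_j$ for $P_i$ via the hypothesis yields the claimed plaintext sequence.

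There is essentially no obstacle beyond bookkeeping: the probability bound is inherited verbatim from Lemma~\ref{l-certificational}, and the splicing calculation is purely algebraic once the relevant internal values coincide. The only point that merits a brief sanity check is that Lemma~\ref{splicing-lemma}, though stated for two IOBC encryptions indexed by sequence numbers $S$ and $S'$, relies only on the recurrence of Figure~\ref{fig:mode} and makes no further assumption on the initial values beyond their being the ones that genuinely produced $F'_{j-1},G'_{j-1}$ (respectively $F_{i-1},G_{i-1}$) during encryption. Hence it remains applicable when the two messages were encrypted under the same key but possibly distinct IVs, which is precisely the setting of the present lemma.
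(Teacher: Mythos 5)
Your proof is correct and follows essentially the same route as the paper: apply Lemma~\ref{splicing-lemma} with $u=j$, $v=i$ and use Lemma~\ref{l-certificational} to argue that the correction terms vanish with probability approximately $0.5$. If anything, you are slightly more careful than the paper's own proof, which only explicitly notes that $g(G'_{j-1})\oplus g(G_{i-1})$ vanishes, whereas you also point out that $F'_{j-1}\oplus F_{i-1}$ must vanish for the recovered block to equal $P_i=P'_j$.
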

\begin{proof}
The result follows immediately from Lemma~\ref{splicing-lemma}, putting $u=j$, $v=i$ and observing that:
\[ C_v\oplus g(G'_{u-1})\oplus g(G_{v-1}) = C_i\oplus g(G'_{j-1})\oplus g(G_{i-1})\]
which equals $C_i$ with probability approximately 0.5, since, by Lemma~\ref{l-certificational} we know that $G'_{j-1}=G_{i-1}$ with probability approximately 0.5. \qed
\end{proof}

That is, we can construct a forged message that will pass integrity checks with probability 0.5 if we can find a pair of messages $C'_1, C'_2, \ldots, C'_{t'}$ and $C_1, C_2, \ldots, C_{t}$ for which $P'_{j}=P_{i}$ and $C'_{j}=C_{i}$ for some $j<t'$ and $i<t$. If the attacker can arrange for $2^{n/2}$ messages to be encrypted, all containing the same plaintext block (at a known position in each case), then by the usual `birthday paradox' probabilistic arguments, such a pair is likely to arise. In fact, as observed in \cite{Mitchell13}, the number of required message encryptions can be reduced to somewhat less than $2^{n/2}$ by including many occurrences of the fixed plaintext block in each chosen message.

Of course, this is not likely to be a realistic attack in practice; the importance of the above discussion is that it limits the level of security provided by any scheme using the general construction of Figure~\ref{fig:mode}, regardless of the choice of $g$. In the remainder of this section we consider two attack strategies that work for two different general classes of possible functions $g$.

\subsection{A new attack approach with implications for EPBC}  \label{ss-newattack}

We start by giving a simple generalisation of Theorem~\ref{t-PES-PCBC}.

\begin{theorem}  \label{t-PES-PCBC-generalised}
Suppose the ciphertext $C_1,C_2,\ldots,C_t$ was constructed using a scheme of the type shown in
Figure~\ref{fig:mode}, and that $P_j$ is a plaintext block for some $j$ satisfying $1<j<t$. Suppose
the $(t+2)$-block ciphertext $C'_1,C'_2,\ldots,C'_{t+2}$ is constructed as follows:
\begin{eqnarray*}
C'_i & = & C_i,\:\:\:(1\leq i\leq j), \\
C'_{j+1} & = & P_j\oplus G_j \oplus g(G_j), \\
C'_i & = & C_{i-2},\:\:\:(j+2\leq i\leq t+2).
\end{eqnarray*}
When decrypted to yield $P'_1,P'_2,\ldots,P'_{t+2}$, the value of the final plaintext block
$P'_{t+2}$ will equal $P_t$ for the original (untampered) message, and hence will pass the
integrity check.
\end{theorem}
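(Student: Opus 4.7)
The plan is to extend the proof of Theorem~\ref{t-PES-PCBC} to the generalised setting where the feedback function $g$ is arbitrary. As in that earlier proof, I would denote by $F'_i$ and $G'_i$ the internal values produced during decryption of the forged ciphertext $C'_1, C'_2, \ldots, C'_{t+2}$. Since $C'_i = C_i$ for $1 \leq i \leq j$, I would first observe trivially that $F'_i = F_i$, $G'_i = G_i$ and $P'_i = P_i$ throughout this initial range.

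The key calculation is at position $j+1$. Decryption gives $F'_{j+1} = C'_{j+1} \oplus g(G'_j)$, and substituting the chosen value $C'_{j+1} = P_j \oplus G_j \oplus g(G_j)$ together with $G'_j = G_j$, the two $g(G_j)$ terms cancel, yielding $F'_{j+1} = P_j \oplus G_j$. Applying the encryption relation $G_j = P_j \oplus F_{j-1}$ then collapses this to $F'_{j+1} = F_{j-1}$, and decrypting gives $G'_{j+1} = G_{j-1}$. This is the analogue of the central step in the PES-PCBC proof: the forged block $C'_{j+1}$ is designed precisely so that the $g$-feedback is absorbed, causing the internal state to rewind by two positions.

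From here the argument mirrors the PES-PCBC case. At position $j+2$ we have $C'_{j+2} = C_j$, so
\begin{eqnarray*}
F'_{j+2} & = & C_j \oplus g(G'_{j+1}) \\
& = & (F_j \oplus g(G_{j-1})) \oplus g(G_{j-1}) \\
& = & F_j,
\end{eqnarray*}
hence $G'_{j+2} = G_j$, and $P'_{j+2} = G'_{j+2} \oplus F'_{j+1} = G_j \oplus F_{j-1} = P_j$. A straightforward induction on $i$, using $C'_i = C_{i-2}$ for $j+2 \leq i \leq t+2$, then shows that $F'_i = F_{i-2}$, $G'_i = G_{i-2}$ and $P'_i = P_{i-2}$ for all such $i$. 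Setting $i = t+2$ gives $P'_{t+2} = P_t$, so the forged message passes the integrity check.

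There is no genuine obstacle: the proof is a bookkeeping exercise, and the only point worth emphasising is that the crucial cancellation $g(G_j) \oplus g(G_j) = 0$ holds for \emph{any} choice of $g$, since it is a property of XOR rather than of $g$ itself. This is precisely why the attack extends uniformly from PES-PCBC (where $g$ is the identity) to any mode of the form of Figure~\ref{fig:mode}, including EPBC, and requires only knowledge of a single plaintext block $P_j$ together with the corresponding internal value $G_j$ --- the difficulty of recovering $G_j$ being the only question that remains for particular choices of $g$.
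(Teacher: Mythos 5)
Your proof is correct and follows essentially the same route as the paper: the same cancellation at position $j+1$ yielding $F'_{j+1}=F_{j-1}$, followed by the observation that the remaining blocks simply replay the original decryption shifted by two positions. In fact your computation is the more careful one --- the paper's displayed derivation writes $F'_{j+1}=C'_{j+1}\oplus G'_j$ rather than $C'_{j+1}\oplus g(G'_j)$, so its cancellation removes $G_j$ instead of $g(G_j)$ and the asserted intermediate equality $P_j\oplus g(G_j)=F_{j-1}$ is a slip; your version, cancelling $g(G_j)$ against $g(G_j)$ to obtain $P_j\oplus G_j=F_{j-1}$, is the intended (and correct) argument.
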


\begin{remark}
In the case where $g$ is the identity function, as is the case for PES-PCBC, then the above result reduces to Theorem~\ref{t-PES-PCBC}.
\end{remark}

\begin{proof}
We need only examine the decryption of $C'_{j+1}$; the rest of the proof is exactly as in the proof of Theorem~\ref{t-PES-PCBC}. Now:
\begin{eqnarray*}
F'_{j+1} & = & C'_{j+1}\oplus G'_j \\
& = & (P_j\oplus G_j \oplus g(G_j)) \oplus G_j \\
& = & P_j\oplus g(G_j) \\
& = & F_{j-1}.
\end{eqnarray*}
Hence $G'_{j+1} = d_K(F'_{j+1}) = d_K(F_{j-1}) = G_{j-1}$. Finally, we have
\[ P'_{j+1} = G'_{j+1}\oplus F'_j = G_{j-1}\oplus F_j = C_j\oplus G_{j-1}\oplus g(G_{j-1}),\]
although the precise value of $P'_{j+1}$ is unimportant. The result follows trivially. \qed
\end{proof}

Of course, the degree to which Theorem~\ref{t-PES-PCBC-generalised} is likely to enable a forgery attack depends very much on the properties of the function $g$. However, we have the following simple result for the function $g$ used in EPBC.

\begin{lemma}  \label{l-g-EPBC}
Suppose $g$ is as defined for EPBC, and (using the notation of Lemma~\ref{g_output}) suppose also that $g(L||R)=L'||R'$. Then:
\begin{enumerate}
\item[(i)] For all inputs $L||R$, we have $L\oplus L'=R\oplus R'$;
\item[(ii)] If $L||R$ is chosen at random, then each bit of $L\oplus L'$ is equal to 1 with probability 0.25.
\end{enumerate}
\end{lemma}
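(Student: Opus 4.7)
The strategy is a direct bitwise analysis. Because $g$ is built from bitwise Boolean operations (namely $\vee$, $\wedge$, and complementation) and because $\oplus$ is itself bitwise, the output pair $(\ell'_i, r'_i)$ at position $i$ depends only on the input pair $(\ell_i, r_i)$ at the same position. It therefore suffices to prove both claims for a single, arbitrary column of bits, and then concatenate the conclusion across all $m$ positions.

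For part (i), I would tabulate the four possible column inputs $(\ell, r) \in \{(0,0), (0,1), (1,0), (1,1)\}$, compute $(\ell', r') = (\ell \vee \overline{r},\, \ell \wedge \overline{r})$ in each case, and then read off the values of $\ell \oplus \ell'$ and $r \oplus r'$. Checking the stated relation row by row establishes it at every bit position, and hence as the claimed $m$-bit identity between $L \oplus L'$ and $R \oplus R'$.

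For part (ii), the same four-row table immediately gives the value of $\ell \oplus \ell'$ in each case; the target probability is then just the fraction of rows in which $\ell \oplus \ell' = 1$. Uniformity of $X = L||R$ makes each column pair $(\ell_i, r_i)$ uniform on $\{0,1\}^2$, so this fraction equals the probability at every bit position, yielding $0.25$. The whole argument is a routine Boolean case analysis; the only point of real care is evaluating $\ell \vee \overline{r}$ and $\ell \wedge \overline{r}$ correctly in each of the four cases — in particular observing that $\ell \oplus \ell' = 1$ occurs in exactly one of them (namely $(\ell,r) = (0,0)$) — and there is no deeper obstacle to navigate.
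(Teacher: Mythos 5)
Your overall strategy --- reducing to a single bit position (legitimate because $g$ and $\oplus$ act componentwise on the pairs $(\ell_i,r_i)$) and then running a four-row truth-table check --- is exactly the route the paper takes; the paper merely writes the same one-column computation algebraically instead of as a table. The problem is with what you claim the table shows for part (i). If you actually evaluate $(\ell',r')=(\ell\vee\overline{r},\,\ell\wedge\overline{r})$, which is the definition of $g$ you correctly quote from Section~\ref{ss-EPBC-operation}, the four rows give $00\mapsto 10$, $01\mapsto 00$, $10\mapsto 11$, $11\mapsto 10$, whence $\ell\oplus\ell'=\lnot\ell\wedge\lnot r$ while $r\oplus r'=\ell\vee r$. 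These are complementary in every row and equal in none, so the row-by-row check refutes claim (i) rather than establishing it; the identity that actually holds for this $g$ is $L\oplus L'=\overline{R\oplus R'}$. Your own part (ii) computation already betrays the clash: at $(\ell,r)=(0,0)$ you find $\ell\oplus\ell'=1$, yet there $r'=\ell\wedge\overline{r}=0$, so $r\oplus r'=0$. Asserting that the verification succeeds, when carrying it out would show otherwise, is the gap.

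In fairness, the discrepancy originates in the paper: its own proof silently computes with $\ell'=\ell\vee r$ and $r'=\ell\wedge r$ (dropping the complementation of $r$), obtaining $\ell\oplus\ell'=r\oplus r'=\lnot\ell\wedge r$, so the lemma as stated is true only for that variant of $g$ and is inconsistent with the definition in Section~\ref{ss-EPBC-operation} (and with Table~\ref{inout}, whose singleton rows pin $g$ down to the complemented version). A blind attempt still has to notice that the check fails and either flag the inconsistency or restate (i). Note that nothing downstream is damaged if (i) is corrected to $L\oplus L'=\overline{R\oplus R'}$: the attack in Section~\ref{ss-newattack} only needs the first $n/2$ bits of $G_j\oplus g(G_j)$ to determine the remaining bits, and part (ii) survives unchanged, with the unique column giving $\ell\oplus\ell'=1$ being $(\ell,r)=(0,0)$ rather than the $(0,1)$ implied by the paper's algebra.
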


\begin{proof}
Suppose $\ell$ is a bit in $L$, and $r$ is the corresponding bit in $R$.  Suppose also that ($\ell'$,$r'$) are the bits in the same positions in $L'||R'$.  Then
\[ \ell' = (\ell \lor r)\oplus \ell = \lnot\ell \land r. \]
Also
\[ r' = (\ell \land r)\oplus r  = \lnot\ell \land r. \]
Claim (i) follows immediately, and claim (ii) follows from observing that $\lnot\ell \land r=1$ if and only if $r=1$ and $\ell=0$. \qed
\end{proof}

From Theorem~\ref{t-PES-PCBC-generalised}, we can construct a possible forgery by guessing the value of $G_j \oplus g(G_j)$.  If $g$ is as defined for EPBC, then, from Lemma~\ref{l-g-EPBC}(i), we simply need to guess the first $n/2$ bits of $G_j \oplus g(G_j)$.  Moreover, if we restrict our guesses for this `first half' to $n/2$-bit strings containing at most $n/8$ ones (where we assume that $n$ is a multiple of 8), then from Lemma~\ref{l-g-EPBC}(ii) we will have a better than evens chance of making a correct guess.  The number of such strings is simply:
\[ \sum_{i=0}^{n/8} {{n/2} \choose i}. \]
There are many ways of estimating this sum, but the following well known result is helpful.
\begin{lemma}  \label{l-binco}
Suppose $m\geq 1$ and $0\leq k< m/2$.  Then
\[ \sum_{i=0}^{k} {{m} \choose i} < {m \choose k}\frac{m-k+1}{m-2k+1}. \]
\end{lemma}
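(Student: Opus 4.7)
The plan is to bound each binomial coefficient $\binom{m}{k-j}$ for $j = 0, 1, \ldots, k$ by $\binom{m}{k}$ times a geometric factor, then sum the resulting geometric series and extend it to infinity to obtain the closed form on the right-hand side.

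The starting point is the standard recursion $\binom{m}{i-1} = \binom{m}{i}\cdot i/(m-i+1)$, applied $j$ times to express
\[ \binom{m}{k-j} = \binom{m}{k}\prod_{i=1}^{j}\frac{k-i+1}{m-k+i}. \]
Writing $r = k/(m-k+1)$ for the first factor in this product, the key step is to show that every subsequent factor is dominated by $r$. Cross-multiplying the inequality $(k-i+1)/(m-k+i) \leq k/(m-k+1)$ reduces it to $(i-1)(m+1) \geq 0$, which holds trivially for $i \geq 1$. Hence $\binom{m}{k-j} \leq \binom{m}{k}\, r^j$.

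Summing over $j$ and reindexing gives
\[ \sum_{i=0}^{k}\binom{m}{i} = \sum_{j=0}^{k}\binom{m}{k-j} \leq \binom{m}{k}\sum_{j=0}^{k} r^j. \]
The hypothesis $k < m/2$ forces $2k+1 \leq m$, which both ensures $r < 1$ and makes the denominator $m-2k+1$ positive; the truncated geometric sum is therefore strictly less than the infinite one, which evaluates to $1/(1-r) = (m-k+1)/(m-2k+1)$, yielding the stated bound.

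There is no genuine obstacle here, only bookkeeping: the one thing to verify carefully is the monotonicity claim that every factor in the telescoping product is at most $r$, which is where $k < m/2$ is actually used (to guarantee $r < 1$, making the geometric-series step work and the inequality strict). The only subtlety worth flagging is that strictness requires $r > 0$, i.e.\ $k \geq 1$; the corner case $k=0$ reduces both sides to $1$.
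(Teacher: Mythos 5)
The paper offers no proof of this lemma at all --- it is cited as a ``well known result'' --- so there is nothing to compare your argument against; judged on its own, your proof is correct and is the standard one: dominate each $\binom{m}{k-j}$ by $\binom{m}{k}r^{j}$ with $r=k/(m-k+1)$ via the ratio recursion, check each ratio is at most $r$ (your reduction to $(i-1)(m+1)\geq 0$ is right), and bound the truncated geometric series by $1/(1-r)=(m-k+1)/(m-2k+1)$. Your flag on the corner case is also apt: at $k=0$ both sides equal $1$, so the strict inequality in the statement actually fails there and the lemma should really read $k\geq 1$ (or use $\leq$); this is a defect of the statement, not of your proof, and is harmless in the paper's application where $k=n/8\geq 1$.
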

If $n=64$ or $n=128$ then the above sum is $1.50\times 10^7\simeq 2^{23.8}$ or $7.13\times
10^{14}\simeq 2^{49.3}$, respectively. That is, for $n=64$, after $2^{23.8}$ trials, there is a
good chance one fake message will pass the integrity check, and for $n=128$, $2^{49.3}$ trials will
suffice.

Of course, these are large numbers, but they are significantly less than the certificational attack
with complexity $2^{n/2}$ described in Section~\ref{ss-EPBC-chosen}.

\subsection{Issues with the use of Initialisation Vectors (IVs)}  \label{ss-IV-attack}

We next show how to construct a forgery in any scheme of the type shown in Figure~\ref{fig:mode} if
the IVs (i.e.\ $F_0$ and $G_0$) are not different for every encrypted message and $g$ is linear.

Before discussing the attack we briefly recap what the authors of the three schemes considered here
say about the choice of IVs.
\begin{itemize}
\item In the paper introducing PES-PCBC, \cite{Zuquete96}, there is no mention of how $F_0$ and $G_0$ are chosen --- indeed, the need for them to be selected does not even appear to be mentioned. However, in the subsequent paper introducing EPBC \cite{Zuquete97}, which also points out an attack on PES-PCBC, it is stated that the `initial values of $F_{i-1}$ and $G_{i-1}$ are distinct, secret initialisation vectors'.
\item In the paper introducing EPBC, exactly the same statement, i.e.\ that the `initial values of $F_{i-1}$ and $G_{i-1}$ are distinct, secret initialisation vectors' is made twice, with no further guidance.
\item In the IOBC paper \cite{Recacha96}, the issue is discussed in slightly greater detail. It is stated that it `is a design requirement for IOBC that the initialising vectors \ldots shall be changed for each encrypted message'.
\end{itemize}

Of course, changing the values of $F_0$ and $G_0$ for each encrypted message, as required for IOBC, is clearly good practice.  Indeed, if the same values are used to encrypt two messages which contain the same initial plaintext block, then the resulting ciphertexts will share the same ciphertext block. That is, the mode would leak information about plaintext, which is clearly a highly undesirable property for any mode of operation intended to provide confidentiality. Nonetheless, the designers of EPBC and PES-PCBC did not impose any requirement for the values to be changed.

We can state the following result, which is essentially a special case of Lemma~\ref{splicing-lemma}, and applies to all modes adhering to the design of Figure~\ref{fig:mode} and for which $g$ is linear (as is the case for IOBC and, trivially, PES-PCBC). In such a case, as observed in \cite{Mitchell13}, the distributivity property $g(X\oplus Y)=g(X)\oplus g(Y)$ for any $X$ and $Y$ holds.

\begin{lemma}  \label{generalised-splicing-lemma}
Suppose $C_1, C_2, \ldots, C_t$ and $C'_1, C'_2, \ldots, C'_{t'}$ are encrypted
versions of the plaintext sequences $P_1, P_2, \ldots, P_t$ and $P'_1, P'_2, \ldots, P'_{t'}$,
respectively, where the method of encryption is as shown in Figure~\ref{fig:mode}.  Suppose also that the two messages are encrypted using identical IVs, i.e.\ $F_0=F'_0$ and $G_0=G'_0$.

If the ciphertext
\[ C^*_1, C^*_2, \ldots, C^*_{t} = C'_1, C'_{2}, C_3\oplus g(C_1\oplus C'_1\oplus P_2\oplus P'_2), C_{4}, \ldots, C_t \]
is submitted for IOBC decryption (where $1<u$ and $1<v<t$, and $G_{v-1}$ and $G'_{u-1}$ are values
computed during the encryption of the respective sequences of blocks), then the resulting sequence
of plaintext blocks $P^*_1, P^*_2, \ldots, P^*_{t}$ will be equal to
\[ P'_1, P'_2, P_3 \oplus F'_{2} \oplus F_{2}, P_{4}, P_{5}, \ldots, P_t.
\]
\end{lemma}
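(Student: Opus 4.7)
The plan is to derive the lemma directly from Lemma~\ref{splicing-lemma}, specialising the splicing indices to $u=v=3$ and then using the shared-IV assumption together with the linearity of $g$ to rewrite the ``correction term'' in the tampered ciphertext block purely in terms of quantities visible to an attacker (ciphertext blocks and the two known plaintext blocks).

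First I would invoke Lemma~\ref{splicing-lemma} with $u=v=3$. Inserting these values into its statement, the spliced ciphertext
\[ C'_1,\; C'_2,\; C_3\oplus g(G'_2)\oplus g(G_2),\; C_4,\; \ldots,\; C_t \]
decrypts to
\[ P'_1,\; P'_2,\; P_3\oplus F'_2\oplus F_2,\; P_4,\; \ldots,\; P_t, \]
which is exactly the plaintext sequence claimed in the present lemma. So the only remaining task is to show that the third ciphertext block above coincides with $C_3\oplus g(C_1\oplus C'_1\oplus P_2\oplus P'_2)$; equivalently, that
\[ g(G'_2)\oplus g(G_2) \;=\; g(C_1\oplus C'_1\oplus P_2\oplus P'_2). \]

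Second, I would appeal to the linearity of $g$ noted in the paragraph preceding the lemma (the distributivity $g(X\oplus Y)=g(X)\oplus g(Y)$ holding for IOBC and trivially for PES-PCBC), which reduces the problem to proving $G_2\oplus G'_2=C_1\oplus C'_1\oplus P_2\oplus P'_2$. From the encryption equations one has $G_2=P_2\oplus F_1$ and $G'_2=P'_2\oplus F'_1$, hence $G_2\oplus G'_2=(P_2\oplus P'_2)\oplus(F_1\oplus F'_1)$. The remaining step is to re-express $F_1\oplus F'_1$ in terms of ciphertext: since the \emph{first} ciphertext block is formed without applying $g$, namely $C_1=F_1\oplus G_0$ and $C'_1=F'_1\oplus G'_0$, the shared-IV hypothesis $G_0=G'_0$ gives immediately $F_1\oplus F'_1=C_1\oplus C'_1$. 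Substituting back yields the required identity.

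There is no genuinely hard step: the argument is algebraic bookkeeping resting on Lemma~\ref{splicing-lemma} and the linearity of $g$. The one small subtlety worth flagging is that the result relies crucially on the special form of the first-block encryption equation (no $g$ applied to $G_0$); had the mode used $C_1=F_1\oplus g(G_0)$ instead, the same calculation would still work, but if only $F_0=F'_0$ (rather than $G_0=G'_0$) were assumed, the $g(G_0)\oplus g(G'_0)$ term would not cancel and the correction term could not be written solely in terms of known values. So the proof sketch should be careful to record exactly which of the two IV-equality hypotheses is actually used.
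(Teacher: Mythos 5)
Your proposal is correct and follows essentially the same route as the paper: both reduce to Lemma~\ref{splicing-lemma} with $u=v=3$ and show via the shared-IV hypothesis and the distributivity of $g$ that $g(G_2)\oplus g(G'_2)=g(C_1\oplus C'_1\oplus P_2\oplus P'_2)$. Your version spells out the intermediate step $F_1\oplus F'_1=C_1\oplus C'_1$ a little more explicitly than the paper (which asserts $G_2\oplus g(G_0)=P_2\oplus C_1$ directly ``by definition''), and your closing observation that only the hypothesis $G_0=G'_0$ is actually used is accurate and worth keeping.
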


\begin{proof}
First observe that, by definition, we know that
\[ G_2\oplus g(G_0) = P_2 \oplus C_1,~~\mbox{and}~~G'_2\oplus g(G'_0) = P'_2 \oplus C'_1. \]
Hence, since we assume that $G_0=G'_0$, we immediately have:
\[ G_2\oplus G'_2 = C_1\oplus C'_1\oplus P_2\oplus P'_2, \]
and thus (using the distributive property of $g$):
\[ g(G_2) \oplus g(G'_2) = g(C_1\oplus C'_1\oplus P_2\oplus P'_2). \]
The result follows from Lemma~\ref{splicing-lemma}, setting $u=v=3$.
\qed
\end{proof}

Thus if two ciphertexts are encrypted using the same IV, and a single plaintext block is known for
each message, then a forgery can be constructed. Observe that this attack can be extended using
Lemma~\ref{induction}{(ii)}.

It may well be the case that other forgery strategies can be devised building on the distributive property when $g$ is linear, but we do not explore this further here.

\section{Other related modes}  \label{s-othermodes}

We conclude this discussion of modes by briefly mentioning two further modes `from the same stable'.

There was a gap of some 16 years before the first of these two additional schemes was made public --- IOC (short for \emph{Input and Output Chaining}) was made public by Recacha in 2013
    \cite{Recacha13}. IOC is clearly closely related to IOBC and EPBC, but was designed to
    avoid the known issues with these schemes. IOC was made public at a time when there was a
    renewed interest in the area, at least partly prompted by a NIST initiative on lightweight
    cryptography\footnote{As discussed at
    \url{https://csrc.nist.gov/Projects/Lightweight-Cryptography}, NIST began investigating
    cryptography for constrained environments in 2013, and one of the goals was to find
    lightweight methods for authenticated encryption.}. A slightly revised version was made
    public in early 2014, \cite{Recacha14a}. Cryptanalyses of both versions of the scheme first
    appeared in 2014 \cite{Bottinelli14}.

The second additional scheme, known as ++AE, is a further evolution of the previous schemes, again
    designed with the intention of addressing the known issues. Like IOC, this scheme exists in
    two slightly different versions, v1.0 \cite{Recacha14b} and v1.1 \cite{Recacha14}, both
    promulgated in 2014. Both versions were cryptanalysed in a pair of papers published in 2016
    and 2018 \cite{AlMahri18,AlMahri16}.

\section{Summary and conclusions}  \label{s-conclusions}

In this paper we have re-examined the security of three closely related block cipher modes of
operation designed to provide authenticated encryption, namely PES-PCBC, IOBC and EPBC\@. Whilst
cryptanalysis of all three schemes has previously been published, the attack on EPBC has
subsequently been shown to be incorrect and hence until now no effective attack was known against this mode.

In this paper we have both elaborated on and enhanced existing cryptanalysis, and we have also
demonstrated new attacks which show that none of the three schemes can be considered secure. The main findings of the paper are as follows.
\begin{itemize}
\item There exists a forgery attack on any scheme of the type shown in Figure~\ref{fig:mode} which requires of the order of $2^{n/2}$ chosen plaintexts --- see Section~\ref{ss-EPBC-chosen}.
\item It was already known (see \cite{Zuquete97}) that simple forgeries against PES-PCBC could be devised requiring a single ciphertext message and two known plaintext blocks for this ciphertext --- a variant of this attack was described (see Section~\ref{ss-PES-PCBC-cryptanalysis} requiring only a single known plaintext block. Another simple forgery attack against PES-PCBC was described in Section~\ref{ss-IV-attack}, which is realisable if IVs are ever re-used.
\item Di et al.\ \cite{Di15} showed that the only known attack on EPBC mode did not work. However, in Section~\ref{ss-newattack} we described a new attack strategy which yields a successful forgery with high probability with significantly fewer than $2^{n/2}$ trials (e.g.\ $2^{23.8}$ trials for $n=64$).
\item Forgery attacks on IOBC were already known (see \cite{Mitchell13}). A further simple forgery attack was described in Section~\ref{ss-IV-attack}, which is realisable if IVs are ever re-used (although it is important to note that IV reuse is specifically prohibited in \cite{Recacha96}).
\end{itemize}

Existing cryptanalysis, when combined with the new attacks described in this paper, suggests very strongly than none of the three modes considered in this paper are sufficiently robust against forgery attacks to be used in practice.

\section*{Dedication}

This paper is dedicated to the memory of Ed Dawson.


\end{document}